\providecommand{\tightlist}{%
  \setlength{\itemsep}{0pt}\setlength{\parskip}{0pt}}
\theoremstyle:=definition,remark,plain\do{%
        \expandafter\g@addto@macro\csname th@\theoremstyle\endcsname{%
            \addtolength\thm@preskip\parskip
            }%
        }
\theoremstyle{plain}
\newtheorem{theorem}{Theorem}
\newtheorem{lemma}[theorem]{Lemma}
\newtheorem{corollary}{Corollary}
\theoremstyle{definition}
\newtheorem{definition}{Definition}
\newtheorem{remark}{Remark}
\DeclareMathOperator{\sign}{sign}
\newcommand{\pushright}[1]{\ifmeasuring@#1\else\omit\hfill$\displaystyle#1$\fi\ignorespaces}
\newcommand{\pushleft}[1]{\ifmeasuring@#1\else\omit$\displaystyle#1$\hfill\fi\ignorespaces}
\newcommand{\hideFromPandoc}[1]{#1}
\title{A closed form scale bound for the
\((\epsilon, \delta)\)-differentially private Gaussian Mechanism valid
for all privacy regimes\thanks{This work has been funded in part by
Innlandet Fylkeskommune, as well as Research Council of Norway grants
308904 and 288856. Thanks to Stephan Dreiseitl and Jerome Le Ny for
helpful discussions.}}
\author{Staal A.
Vinterbo\footnote{Department of Information Security and Communication Technology, Norwegian University of Science and Technology.}}
\date{}
\begin{document}
\maketitle
\begin{abstract}
The standard closed form lower bound on \(\sigma\) for providing
\((\epsilon, \delta)\)-differential privacy by adding zero mean Gaussian
noise with variance \(\sigma^2\) is
\(\sigma > \Delta\sqrt {2}(\epsilon^{-1}) \sqrt {\log \left( 5/4\delta^{-1} \right)}\)
for \(\epsilon \in (0,1)\). We present a similar closed form bound
\(\sigma \geq \Delta (\epsilon\sqrt{2})^{-1} \left(\sqrt{az+\epsilon} + s\sqrt{az}\right)\)
for \(z=-\log(4\delta(1-\delta))\) and \((a,s)=(1,1)\) if
\(\delta \leq 1/2\) and \((a,s)=(\pi/4,-1)\) otherwise. Our bound is
valid for all \(\epsilon > 0\) and is always lower (better). We also
present a sufficient condition for \((\epsilon, \delta)\)-differential
privacy when adding noise distributed according to even and log-concave
densities supported everywhere.
\end{abstract}

\newcommand{\xqed}[1]{%
  \leavevmode\unskip\penalty9999 \hbox{}\nobreak\hfill
  \quad\hbox{\ensuremath{#1}}}

\newcommand{\eprop}{}

\newcommand{\mc}[1]{\ensuremath{\mathcal{#1}}}
\newcommand{\mb}[1]{\ensuremath{\mathbf{#1}}}
\newcommand{\mbb}[1]{\ensuremath{\mathbb{#1}}}

\newcommand{\xcirc}[1]{\vcenter{\hbox{$#1\circ$}}}
\newcommand{\fcomp}{\mathbin{\mathchoice
  {\xcirc\scriptstyle}
  {\xcirc\scriptstyle}
  {\xcirc\scriptscriptstyle}
  {\xcirc\scriptscriptstyle}
}}

\newcommand{\R}{\ensuremath{\mathbb{R}}}

\newcommand{\X}{\ensuremath{\mathcal{X}}}
\newcommand{\Y}{\ensuremath{\mathcal{Y}}}
\newcommand{\cov}{\operatorname{cov}}
\newcommand{\var}{\operatorname{Var}}
\newcommand{\E}{\operatorname{E}}
\newcommand{\tr}{\operatorname{Tr}}
\newcommand{\strings}{\ensuremath{\mathbb{S}}}

\newcommand{\mult}{\operatorname{Mult}}
\newcommand{\Bet}{\ensuremath{\mathcal{B}}}
\newcommand{\Lap}{\ensuremath{\mathcal{L}}}
\newcommand{\gam}{\operatorname{Gamma}}
\newcommand{\Exp}{\operatorname{Exp}}
\newcommand{\bit}{\mbb{B}}
\newcommand{\biti}{\ensuremath{\bit^\infty}}
\newcommand{\ms}[1]{\ensuremath{\mathit{#1}}}
\newcommand{\deq}{\ensuremath{\stackrel{\tiny d}{=}}}
\newcommand{\dto}{\ensuremath{\stackrel{\tiny d}{\rightarrow}}}
\newcommand{\rto}{\ensuremath{\stackrel{\tiny\mc{R}}{\rightarrow}}}

\newcommand{\fault}{\ensuremath{\mathrm{fault}}}

\newcommand{\Sketch}{\textsc{Sketch}}
\newcommand{\Assume}{\textsc{Assume}}
\newcommand{\Prove}{\textsc{Prove}}
\newcommand{\Case}{\textsc{Case}}
\newcommand{\Proof}{\textsc{Proof}}
\newcommand{\Let}{\textsc{Let}}

\newcommand{\M}{\mathcal{M}}
\newcommand{\share}{\operatorname{share}}
\newcommand{\csum}{\operatorname{sum}}

\newcommand{\proto}{\textsc{MPLM}}
\newcommand{\corr}{\operatorname{corr}}
\newcommand{\1}{\ensuremath{\mathbf{1}}}
\newcommand{\0}{\ensuremath{\mathbf{0}}}

\newcommand{\diag}{\operatorname{diag}}

\newcommand{\Ri}{\ensuremath{\mbb{R}_{I}}}

\newcommand{\fmod}{\operatorname{fmod}}
\newcommand{\imod}{\operatorname{imod}}

\newcommand{\erf}{\operatorname{erf}}

\newcommand{\mats}{\ensuremath{\mathcal{M}}}

\hypertarget{introduction}{%
\section{Introduction}\label{introduction}}

Differential privacy \autocite{Dwork2006a} is an emerging standard for
individual data privacy. In essence, differential privacy is a bound on
any belief update about an individual on receiving a result of a
differentially private randomized computation. Critical for the utility
of such results is minimizing the random perturbation required for a
given level of privacy.

Formally, let a database \(d\) be a collection of record values from
some set \(V\). Two databases \(d\) and \(d'\) are \emph{neighboring} if
one can be obtained from the other by adding one record. Let
\(\ensuremath{\mathcal{N}}\) be the set of all pairs of neighboring
databases. Then following Dwork et al.
\autocite{Dwork2006a,our-data-ourselves-privacy-via-distributed-noise-generation}
we define differential privacy as follows.

\begin{definition}[$(\epsilon,\delta)$-differential privacy \cite{Dwork2006a,our-data-ourselves-privacy-via-distributed-noise-generation}]

A randomized algorithm \(M\) is called
\((\epsilon,\delta)\)-differentially private if for any measurable set
\(S\) of possible outputs and all
\((d,d') \in \ensuremath{\mathcal{N}}\) \[
\Pr(M(d) \in S) \leq e^{\epsilon}\Pr(M(d') \in S) + \delta,
\] where the probabilities are over randomness used in \(M\). By
\(\epsilon\)-differential privacy we mean \((\epsilon, 0)\)-differential
privacy.

\end{definition}

A standard mechanism for achieving \((\epsilon,\delta)\)-differential
privacy is that of adding zero mean Gaussian noise to a statistic,
called the Gaussian Mechanism. A primary reason for the popularity of
the Gaussian Mechanism is that the Gaussian distribution is closed under
addition. However, Gaussian noise requires \(\delta > 0\), which
represents a relaxation of the stronger \((\epsilon, 0)\)-differential
privacy that is not uncontroversial
\autocite{mcsherryHowManySecrets2017}. On the positive side, a non-zero
\(\delta\) allows, among others, for better composition properties than
\((\epsilon,0)\)-differential privacy \autocite{7883827}. The
exploitation of the composition benefits of using Gaussian noise can be
observed in an application to deep learning by Abadi et
al.\autocite{abadiDeepLearningDifferential2016}.

To achieve \((\epsilon,\delta)\)-differential privacy, the variance
\(\sigma^{2}\) is carefully tuned taking into account the
\emph{sensitivity} \(\Delta\) of the statistic, i.e., the maximum change
in the statistic resulting from adding or removing any individual record
from any database. Of prime importance is to minimize \(\sigma\) while
still achieving \((\epsilon,\delta)\)-differential privacy as higher
\(\sigma\) generally decreases the utility of the now noisy statistic.

In their Theorem A.1
\autocite{dworkAlgorithmicFoundationsDifferential2014}, Dwork and Roth
state that \((\epsilon,\delta)\)-differential privacy is achieved for
\(\epsilon\in (0,1)\) if \begin{align}
\label{eq:drsigma}
\sigma > s(\epsilon, \delta, \Delta)=
{\frac {{\Delta}\,\sqrt {2}}{{\epsilon}}\sqrt {\log
\left( {\frac {5}{4\,{\delta}}} \right) }}.
\end{align} The above bound (\ref{eq:drsigma}) is essentially the
standard closed form used for the Gaussian Mechanism, and we will refer
to it as such in the following. Notably, the restriction
\(\epsilon \in (0,1)\) can present non-obvious pitfalls in addition to
the explicit restriction to privacy regimes with \(\epsilon < 1\). For
example consider the representation of \(\epsilon\) as a function
derived from the standard bound (\ref{eq:drsigma}) (ignoring strict
inequalities) \begin{align}
\label{eq:epsf}
\epsilon(\delta,\sigma,\Delta) &= {\frac {{\Delta}\,\sqrt {2}}{{\sigma}}\sqrt {\log  \left( {\frac 
{5}{4\,{\delta}}} \right) }}.
\end{align} A use of the above function can, for example, be found in
\autocite{pmlr-v89-wang19b}, Section 4. As the magnitude of \(\delta\)
is associated with the failure of guaranteeing strong
\(\epsilon\)-differential privacy, it is sometimes stated that
\(\delta\) should be cryptographically small. Now, the function in
(\ref{eq:epsf}) increases as \(\delta > 0\) decreases, and for fixed
\(\sigma\) and \(\Delta\), even a relatively large \(\delta\) could
result in \(\epsilon(\delta,\sigma,\Delta) \geq 1\), which might not be
obvious. For example, \(\epsilon(10^{-1},1,1) > 2.24\).

\hypertarget{main-contributions}{%
\subsection{Main contributions}\label{main-contributions}}

Our main contributions are twofold:

\begin{enumerate}
\def\labelenumi{\arabic{enumi}.}
\tightlist
\item
  The closed form lower bound on \(\sigma\) for achieving
  \((\epsilon, \delta)\)-differential privacy given in Theorem
  \ref{thm:gaussmech}. Unlike the standard bound, which is defined for
  \(0 < \epsilon < 1\), our bound is valid for all \(\epsilon > 0\). In
  addition it is always better than the standard bound for
  \(0 < \epsilon < 1\).
\item
  The sufficient condition for \((\epsilon, \delta)\)-differential
  privacy for mechanisms that add noise distributed according to an even
  and log-concave density supported everywhere given in Lemma
  \ref{lem:suffcritlc}. The condition is also specialized to the zero
  mean Gaussian distribution in Corollary \ref{cor:gausssuffcrit} and
  the Laplace distribution in Remark \ref{rem:laplace}.
\end{enumerate}

The sufficient condition in point 2.~above is inspired by a sufficient
condition for zero mean Gaussian noise described by Le Ny and Pappas
\autocite{6606817}. We derive their condition from Lemma
\ref{lem:suffcritlc} as Corollary \ref{cor:gausssuffcrit}.

\hypertarget{sec:prelims}{%
\section{A few more preliminaries}\label{sec:prelims}}

We briefly recapitulate known results. In the following, we will let
\(\Phi\) and \(\phi\) denote the standard Gaussian distribution function
and density, respectively.

\begin{definition}

The global sensitivity of a real-valued function \(q\) on databases is
\[
\Delta_q = \max_{(d,d') \in \ensuremath{\mathcal{N}}} |q(d) - q(d')|.
\]

\end{definition}

\begin{lemma}\label{lem:translation}

Let \(X\) be distributed according to density
\(f:\ensuremath{\mathbb{R}}\to \ensuremath{\mathbb{R}}\). For arbitrary
but fixed \(x,y \in \ensuremath{\mathbb{R}}\) we have that
\(\Pr(X + (x-y) \in S) \leq e^{\epsilon} \Pr(X \in S) + \delta\) for all
measurable \(S\) implies
\(\Pr(X + x) \in S) \leq e^{\epsilon} \Pr(X + y \in S) + \delta\) for
all measurable \(S\).

\end{lemma}

\begin{proof}

Follows directly from that if \(S\subseteq \ensuremath{\mathbb{R}}\) is
measurable, so is \(r + S\) for any \(r \in \ensuremath{\mathbb{R}}\),
including \(r = -y\). \qedhere

\end{proof}

\hypertarget{our-closed-form-bound}{%
\section{Our closed form bound}\label{our-closed-form-bound}}

We are now ready to present our main contributions.

\begin{lemma}\label{lem:suffcritlc}

Let \(X\) be a random variable distributed according to even density
\(f(x) = e^{-\psi(x)}\) where
\(\psi : \ensuremath{\mathbb{R}}\to \ensuremath{\mathbb{R}}\) is convex.
Then for a real-valued function \(q\) on databases with global
sensitivity \(\Delta\) and a database \(d\), the mechanism returning a
variate of \(q(d) + s X\) is \((\epsilon, \delta)\)-differentially
private if \begin{align*} 
\Pr \left( X > \frac{x - \Delta}{s}  \right) \leq \delta 
\end{align*} where \[
x \leq \sup \left\{z \mid \psi\left(\frac{z}{s}\right) - \psi\left(\frac{z - \Delta}{s}\right) \leq \epsilon\right\}.
\]

\end{lemma}

\begin{proof}

First, since
\(\psi : \ensuremath{\mathbb{R}}\to \ensuremath{\mathbb{R}}\) is convex,
\(f\) is log-concave and supported everywhere. Let \(f_{sX + x}\) denote
the density of \(sX + x\) for
\(x,s \in \ensuremath{\mathbb{R}}, s > 0\), and recall that
\(f_{sX + x}(w) = s^{-1}f((w - x)/s)\).

Let \(d = x - y\). Due to Lemma \ref{lem:translation} it is sufficient
to show \(\Pr(X + d \in S) \leq e^{\epsilon} \Pr(X \in S) + \delta\) in
order to prove
\(\Pr(X + x \in S) \leq e^{\epsilon} \Pr(X + y \in S) + \delta\).

Since \(f\) is supported everywhere, we now define for
\(x,y\in \ensuremath{\mathbb{R}}\) the likelihood ratio \(r\) as
\begin{align} \label{eq:rdef}
r = \frac{f_{sX + d}}{f_{sX}}.
\end{align}

Let
\(A = \{z \mid r(z) \leq e^{\epsilon}\} = \{z \mid \log(r(z)) \leq {\epsilon}\}\)
and let \(A^{c}\) denote \(A\)'s complement. Then for measurable \(S\)
\begin{align*}
\Pr(s X + d \in S) &= \Pr(s X + d \in S \cap A) + \Pr(s X + d \in S
\cap A^{c}).
\end{align*} Applying (\ref{eq:rdef}), \begin{align*}
\Pr(s X + d \in S \cap A) &\leq \Pr(s X + d \in S ) \\
&= \int_{S} f_{sX}(w)r(w) dw \\
&\leq e^{\epsilon}\int_{S} f_{sX}(w)dw = e^{\epsilon} \Pr(s X \in S).
\end{align*} Furthermore, \begin{align*}
\Pr(s X + d \in S \cap A^{c}) \leq \Pr(s X + d \in A^{c}).
\end{align*} This means that a sufficient condition for
\((\epsilon, \delta)\)-differential privacy is \begin{align}
\label{eq:intcrit}
\Pr(s X + d \in A^{c}) \leq \delta.
\end{align} Since \(f\) is log-concave, \(f\) is unimodal and \(r\) is
monotone (see, e.g., \autocite{saumard2014logconcavity}). Let
\(d \geq 0\), since \(f\) is unimodal \(r\) is non-decreasing and we can
write
\(A = \{w \leq x^{*} \mid x^{*} = \sup \{ z \mid \log(r(z)) \leq\epsilon\}\}\).
Now, let \(d \leq 0\). Then \(r\) is non-increasing and we can write
\(A = \{w \geq x_{*} \mid x_{*} = \inf \{ z \mid \log(r(z)) \leq \epsilon\}\}\).
If \(f\) is also even, we have that \(x_{*} = -x^{*}\), and,
consequently, we need only check (\ref{eq:intcrit}) for either
\(d \leq 0\) or \(d \geq 0\). Let \(d \geq 0\). Since \begin{align*}
\log(r(z)) &= \log \left( \frac{f((z - d)/s)}{f(z/s)} \right) \\
&= \log(f((z - d)/s)) - \log(f(z/s)) \\
&= -\psi((z - d)/s) + \psi(z/s).
\end{align*} We can write
\(x^{*} = \sup \{ z \mid -\psi((z - d)/s) + \psi(z/s) \leq \epsilon\}\),
\(A = \{w \leq x^{*}\}\), and \(A^{c} = \{w > x^{*}\}\). Using this, we
get \begin{align*}
\Pr(s X + d \in A^{c}) &= \Pr(sX + d \in \{w > x^{*}\}) \\
&= \Pr(sX + d > x^{*}) = \Pr(X > (x^{*} - d)/s).
\end{align*} Furthermore, \(P(X > w)\) is monotone and decreasing in
\(w\), which means if \(w \leq x^{*}\) and \(d \leq \Delta\),
\(\Pr(X > (w - \Delta)/s) \leq \delta\) implies
\(\Pr(X > (x^{*} - d)/s) \leq \delta\). The proof is concluded by noting
that the case for \(d = y-x\) also follows from the above. \qedhere

\end{proof}

\begin{remark}

Lemma \ref{lem:suffcritlc} above is restricted to even \(f\). An
extension without this restriction can be obtained by determining
conditions for cases \(d \geq 0\) and \(d \leq 0\) in the proof for
Lemma \ref{lem:suffcritlc} separately. Lemma \ref{lem:suffcritlc} would
then be a corollary of this extension.

\end{remark}

\begin{corollary}\label{cor:gausssuffcrit}

Let \(Z\) be a random variable distributed according to the standard
Gaussian distribution. Then for a real-valued function \(q\) on
databases with global sensitivity \(\Delta\) and a database \(d\), the
mechanism returning a variate of \(q(d) + \sigma Z\) is
\((\epsilon, \delta)\)-differentially private if
\begin{align}\label{eq:suffcrit}
\Pr \left( Z > \frac{\sigma  \epsilon}{\Delta}-\frac{\Delta}{2
\sigma} \right) \leq \delta.
\end{align}

\end{corollary}

\begin{proof}

The standard Gaussian density is \(\phi(x) = e^{-\psi(x)}\) for
\(\psi(x) = \frac{\log\left(2 \pi \right)}{2}+\frac{x^{2}}{2}\). It is
even and \(\psi\) is convex. The equation \[
\psi(x/\sigma) - \psi((x - \Delta)/\sigma) = \epsilon
\] has unique solution \[
x^{*} = \frac{\epsilon  \sigma^{2}}{{ \Delta }}+\frac{{ \Delta }}{2}
\] for \(\Delta\geq 0\). Using this solution yields \[
\frac{x^{*} - \Delta}{\sigma} = \frac{\sigma  \epsilon}{\Delta}-\frac{\Delta}{2
\sigma}.
\] The corollary then follows from Lemma \ref{lem:suffcritlc}. \qedhere

\end{proof}

\begin{remark}\label{rem:laplace}

For the standard Laplace distribution, \(\psi(x) = \log(2) + |x|\),
which is convex. Then \begin{align*}
\psi\left(\frac{x}{s}\right) - \psi\left(\frac{x-\Delta}{s}\right)  = 
{\left| \frac{x}{s}\right|}-{\left| \frac{\Delta-x}{s}\right|} = \epsilon
\end{align*} reduces to \(\frac{\Delta}{s} = \epsilon\) for
\(x \geq \Delta > 0\), and
\(\sup\{x \mid \psi\left(\frac{x}{\Delta/\epsilon}\right) - \psi\left(\frac{x-\Delta}{\Delta/\epsilon}\right) \leq \epsilon\} = \sup\{x \mid x \geq \Delta\} = \infty\).
Applying Lemma \ref{lem:suffcritlc} we conclude that for the standard
Laplace random variable \(X\), the mechanism that outputs a variate of
\(q(d) + \Delta/\epsilon X\) is \((\epsilon, 0)\)-differentially
private.

\end{remark}

\begin{lemma}\label{lem:suffcrit2}

Let \(Z\) be a random variable distributed according to the standard
Gaussian distribution. Then for \(\epsilon > 0\), \(\Delta > 0\), and
\(\delta \in (0,1)\) \begin{align}
\Pr \left( Z > \frac{\sigma  \epsilon}{\Delta}-\frac{\Delta}{2
\sigma} \right) \leq \delta, \tag{\ref{eq:suffcrit}}
\end{align} holds if and only if \(\sigma \geq b\) for \begin{align*}
b =
{\frac {{\Delta}}{2\,{\epsilon}} \left( \Phi^{-1} \left( 1-\delta \right) +\sqrt { \left( \Phi^{-1} \left( 1-\delta\right)  \right) ^{2}+2\,{\epsilon}} \right) }
\end{align*} where \(\Phi^{-1}\) is the standard Gaussian quantile
function.

\end{lemma}

\begin{proof}

Let \[
v(\sigma) = \frac{\sigma  \epsilon}{\Delta}-\frac{\Delta}{2 \sigma}.
\] Then, condition (\ref{eq:suffcrit}) can be written
\(\Pr(Z > v(\sigma)) \leq \delta\). Let \begin{align}
l(\sigma) &= \Pr\left(Z > v(\sigma)\right) = 1-\Phi(v(\sigma)) \notag\\
& \iff \notag\\
\Phi(v(\sigma)) &= 1 - l(\sigma) \notag\\
& \iff \notag\\ \label{eq:ld}
v(\sigma) &= \Phi^{-1}\left(1 - l(\sigma)\right).
\end{align} Recall that we want to find a lower bound for \(\sigma\)
such that \(l(\sigma) \leq \delta<1\). We note that \(l(\sigma)\) is
decreasing in \(\sigma\) if \(v\) is increasing in \(\sigma\). This is
the case since \[
v'(\sigma)={\frac {2\,{{\sigma}}^{2}{\epsilon}+{{\Delta}}^{2}}{2\,{\Delta}\,{{\sigma}}^{2}}}
\] is positive for all \(\sigma\) and \(\Delta>0\), \(\epsilon>0\).
Hence, we can find the sought lower bound by solving \(l(b) = \delta\)
for \(b\). We do this by substituting \(\delta\) for \(l(b)\) in
(\ref{eq:ld}) and solving for \(b > 0\), yielding \begin{align*}
b &= 
{\frac {{\Delta}}{2\,{\epsilon}} \left( \Phi^{-1} \left( 1-\delta
\right) +\sqrt { \left( \Phi^{-1} \left( 1-\delta \right)  \right)
^{2}+2\,{\epsilon}} \right) }. \qedhere
\end{align*}

\end{proof}

\begin{remark}\label{rem:folklore}

Let \(Z\) be a standard Gaussian random variable. Recall that
\(P(|Z| > x) \geq P(Z > x)\). This means that \(P(|Z| > x) \leq \delta\)
implies \(P(Z > x) \leq \delta\). Applying Corollary
\ref{cor:gausssuffcrit} we conclude that a sufficient condition for
adding Gaussian noise to achieve \((\epsilon, \delta)\)-differential
privacy is \begin{align} \label{eq:folklore}
\Pr\left(|Z| > \frac{\sigma  \epsilon}{\Delta}-\frac{\Delta}{2
\sigma}\right) \leq \delta.
\end{align} Inspecting proof of the standard bound given by Dwork and
Roth \autocite{dworkAlgorithmicFoundationsDifferential2014}, we note
that it is based on fulfilling the condition (\ref{eq:folklore}) above.
Replacing bound (\ref{eq:suffcrit}) by the bound (\ref{eq:folklore}) in
Lemma \ref{lem:suffcrit2} yields that we must have \begin{align*}
\sigma \geq b &= {\frac {{\Delta}}{2\,{\epsilon}} \left( \Phi^{-1} \left( 1-\delta/2
\right) +\sqrt { \left( \Phi^{-1} \left( 1-\delta/2 \right)  \right)
^{2}+2\,{\epsilon}} \right) } > \frac{\Delta}{\sqrt{2\epsilon}}. 
\end{align*} Since the above holds for all bounds fulfilling
(\ref{eq:folklore}), this represents a generalization and sharpening of
Theorem 4 in \autocite{pmlr-v80-balle18a} that states
\(\sigma \geq \frac{\Delta}{\sqrt{2\epsilon}}\) for the standard bound.

\end{remark}

\begin{lemma}\label{lem:qub}

Let \(\Phi^{{-1}}\) be the standard Gaussian quantile function. Then for
\(p \in (0,1)\) \begin{align*}
  \Phi^{-1}(p) \leq 
  \sqrt{2}\sqrt{\log\left(\frac{1}{4 p\left(1-p \right)}\right)}
  \cdot
  \begin{cases}
  -\frac{\sqrt{\pi}}{2}, & p < \frac{1}{2}, \\
  1, & p \geq \frac{1}{2}.
  \end{cases} 
\end{align*}

\end{lemma}

\begin{proof}

It is well known that
\(\operatorname{erf}(x) = \sign(x)P(\frac{1}{2}, x^{2})\), where \(P\)
is the regularized gamma function
\(P(s, x) = \frac{\gamma(s, x)}{\Gamma(s)}\) in which \(\Gamma\) and
\(\gamma\) are the Gamma and lower incomplete Gamma functions,
respectively (see, e.g., \autocite{olverNISTDigitalLibrary2020} 7.11.1).
From \autocite{olverNISTDigitalLibrary2020} (8.10.11) we have that
\[(1-e^{-\alpha_{a}x})^{a}\leq
P\left(a,x\right)\leq(1-e^{-\beta_{a}x})^{a}\] for \begin{align*}
\alpha_{a}&=\begin{cases}1,&0<a<1,\\
d_{a},&a>1,\end{cases}\\
\beta_{a}&=\begin{cases}d_{a},&0<a<1,\\
1,&a>1,\end{cases}\\
d_{a}&=(\Gamma\left(1+a\right))^{-1/a}.
\end{align*} Since \(a=1/2\) in our case, get that \begin{align*}
\operatorname{erf}(x) \geq 
\begin{cases}
-\sqrt{1-{e}^{-\frac{4 x^{2}}{\pi}}}, & x < 0, \\
\sqrt{1-e^{-x^{2}}}, & x\geq 0.
\end{cases} 
\end{align*} and consequently \begin{align*}
\operatorname{erf}^{-1}(x) \leq 
\begin{cases}
-\frac{\sqrt{\pi}}{2}  \sqrt{-\log\left(1-x^{2}\right)}, & x < 0, \\
\sqrt{-\log(1-x^2)}, & x \geq 0.
\end{cases} 
\end{align*} As
\(\Phi^{-1}(p) = \sqrt{2}\operatorname{erf}^{-1}(2p - 1)\), the Lemma
follows by substituting the upper bound for \(\operatorname{erf}^{-1}\).
\qedhere

\end{proof}

\begin{theorem}[Gaussian mechanism $(\epsilon, \delta)$-differential privacy]\label{thm:gaussmech}

Let \(q\) be a real valued function on databases with global sensitivity
\(\Delta\), and let \(Z\) be a standard Gaussian random variable. Then
for \(\delta \in (0,1)\) and \(\epsilon > 0\), the mechanism that
returns a variate of \(q(d) + \sigma Z\) is
\((\epsilon, \delta)\)-differentially private if \(\sigma \geq b\) where
\begin{align}
\label{eq:bboundex}
b &= 
{\frac {{\Delta}}{2\,{\epsilon}} \left( \Phi^{-1} \left( 1-\delta \right) +\sqrt { \left( \Phi^{-1} \left( 1-\delta\right)  \right) ^{2}+2\,{\epsilon}} \right) }\\
&\leq 
\frac{\Delta}{\epsilon\sqrt{2}} \left(\sqrt{a \log\left(\frac{1}{4
\delta  \left(1-\delta \right)}\right)+\epsilon}+s \sqrt{a
\log\left(\frac{1}{4 \delta  \left(1-\delta \right)}\right)}\right)
\label{eq:bbound}\\
&\leq 
\frac{\Delta}{\epsilon\sqrt{2}} (1 + s) \sqrt{a
\log\left(\frac{1}{4 \delta  \left(1-\delta \right)}\right)} + \frac{\Delta}{\sqrt{2\epsilon}},
 \label{eq:sbound} 
\end{align} where \(\Phi^{-1}\) is the standard Gaussian quantile
function and \begin{align*}
(a,s) &= 
\begin{cases}
(1,1), & 0 < \delta \leq \frac{1}{2}, \\
(\frac{\pi}{4}, -1), &  \frac{1}{2} \leq \delta < 1.
\end{cases} 
\end{align*}

\end{theorem}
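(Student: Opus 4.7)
The plan is to decompose the theorem into three assertions and dispatch each using machinery already established in the excerpt: (i) that $\sigma \geq b$ with $b$ as in \eqref{eq:bboundex} yields $(\epsilon,\delta)$-differential privacy, (ii) the inequality $\eqref{eq:bboundex}\leq\eqref{eq:bbound}$, and (iii) the inequality $\eqref{eq:bbound}\leq\eqref{eq:sbound}$. The expected obstacle is minimal: everything hinges on an algebraic simplification that makes Lemma \ref{lem:qub} line up exactly with the right-hand side of \eqref{eq:bboundex}.

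For (i), I would split on $\delta$. The case $\delta=1$ is immediate since any mechanism is $(\epsilon,1)$-differentially private (the inequality in the definition becomes trivial). For $\delta<1$, Lemma \ref{lem:sufcrit2} states that $\sigma\geq b$ is equivalent to the sufficient condition \eqref{eq:suffcrit}, and Lemma \ref{lem:sufcrit} then upgrades this to $(\epsilon,\delta)$-differential privacy. No additional work is required.

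For (ii), I would set $z=\Phi^{-1}(1-\delta/2)$ and $L=\log\!\bigl(1/(\delta(2-\delta))\bigr)$, so that $b=\tfrac{\Delta}{2\epsilon}\bigl(z+\sqrt{z^{2}+2\epsilon}\bigr)$. Applying Lemma \ref{lem:qub} with $p=1-\delta/2\geq 1/2$ (valid since $\delta\leq 1$) and using the identity $1-(2p-1)^{2}=1-(1-\delta)^{2}=\delta(2-\delta)$, I obtain $z\leq\sqrt{2L}$. The map $w\mapsto w+\sqrt{w^{2}+2\epsilon}$ is monotonically increasing on $w\geq 0$ (as already observed in the proof of Lemma \ref{lem:sufcrit2}), so substituting gives
\[
b \;\leq\; \tfrac{\Delta}{2\epsilon}\bigl(\sqrt{2L}+\sqrt{2L+2\epsilon}\bigr) \;=\; \tfrac{\Delta\sqrt{2}}{2\epsilon}\bigl(\sqrt{L}+\sqrt{L+\epsilon}\bigr),
\]
which is \eqref{eq:bbound}. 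The only delicate point is the identification $1-(2p-1)^{2}=\delta(2-\delta)$, which explains why $\delta(2-\delta)$ appears naturally under the logarithm in \eqref{eq:bbound}.

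For (iii), I would invoke subadditivity of $\sqrt{\cdot}$, namely $\sqrt{L+\epsilon}\leq\sqrt{L}+\sqrt{\epsilon}$, to conclude
\[
\tfrac{\Delta\sqrt{2}}{2\epsilon}\bigl(\sqrt{L}+\sqrt{L+\epsilon}\bigr) \;\leq\; \tfrac{\Delta\sqrt{2}}{2\epsilon}\bigl(2\sqrt{L}+\sqrt{\epsilon}\bigr) \;=\; \tfrac{\Delta\sqrt{2}}{\epsilon}\sqrt{L}+\tfrac{\Delta}{\sqrt{2\epsilon}},
\]
using $\tfrac{\sqrt{2}}{2}=\tfrac{1}{\sqrt{2}}$ in the last step. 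This is exactly \eqref{eq:sbound}, closing the proof. No step presents a real obstacle; the whole argument is an assembly of Lemmas \ref{lem:sufcrit}, \ref{lem:sufcrit2}, and \ref{lem:qub} around the algebraic identity above.
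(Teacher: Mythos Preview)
Your proposal is correct and follows essentially the same route as the paper: the case split on $\delta=1$ versus $\delta<1$, the invocation of Lemmas \ref{lem:sufcrit} and \ref{lem:sufcrit2} for part (i), the application of Lemma \ref{lem:qub} at $p=1-\delta/2$ together with the identity $1-(2p-1)^{2}=\delta(2-\delta)$ for part (ii), and subadditivity of the square root for part (iii) are exactly the paper's steps. You make the monotonicity of $w\mapsto w+\sqrt{w^{2}+2\epsilon}$ explicit where the paper leaves it implicit in the word ``substituting,'' but this is a difference in exposition, not in approach.
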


\begin{proof}

Differential privacy and (\ref{eq:bboundex}) follows from Corollary
\ref{cor:gausssuffcrit} and Lemma \ref{lem:suffcrit2}. Applying the
bound for \(\Phi^{-1}\) from Lemma \ref{lem:qub} to
\(\Phi^{-1}(1-\delta)\) in (\ref{eq:bboundex}), yields the bound
(\ref{eq:bbound}) after some elementary manipulations. Bound
(\ref{eq:sbound}) is achieved by applying the fact that
\(\sqrt{a + b} \leq \sqrt{a} + \sqrt{b}\) for non-negative \(a,b\) to
the right hand side of (\ref{eq:bbound}). \qedhere

\end{proof}

\begin{remark}\label{rem:gaussball}

For a multidimensional statistic with \(\Delta\) determined using the
Euclidean norm, adding Gaussian noise with covariance matrix
\(\sigma^{2}\operatorname{diag}(1, 1, \ldots, 1)\) is
\((\epsilon,\delta)\)-differentially private for \(\sigma \geq b\) for
\(b\) given by (\ref{eq:bboundex}). This follows from Theorem
\ref{thm:gaussmech} and an argument Dwork and Roth use in their proof of
the standard bound (Theorem A.1.~in their monograph
\autocite{dworkAlgorithmicFoundationsDifferential2014}). This result was
also shown by Le Ny and Pappas \autocite{6606817}.

\end{remark}

\hypertarget{illustrating-constraints-of-the-standard-bound}{%
\section{Illustrating constraints of the standard
bound}\label{illustrating-constraints-of-the-standard-bound}}

Here we graphically illustrate that constraining \(\epsilon\) from above
for the standard bound is indeed needed. From Remark \ref{rem:folklore},
the sufficient condition for privacy the standard bound meets is
\begin{align*} 
\Pr(|Z| > v(\sigma, 2)) \leq \delta \tag*{(\ref{eq:folklore})}
\end{align*} where \[
v(\sigma, y) = \frac{\sigma \epsilon}{\Delta} - \frac{\Delta}{y\sigma}.
\] We further have that for \(s\) defined in (\ref{eq:drsigma}),
\begin{align*}
w(\epsilon, \delta) = v(s(\epsilon, \delta, \Delta), 2) &= 
{\frac {\sqrt {2}\left( 4\,\log  \left( {\frac {5}{4
\,\delta}} \right) -\epsilon \right)}{4 \sqrt {\log  \left( {\frac {5}{4\,\delta}} \right)}} 
}
\end{align*} which does not depend on \(\Delta\). Let \[
g(\epsilon, \delta) = \delta -
2(1 - \Phi(w(\epsilon, \delta))) = \delta - \Pr(|Z| >
w(\epsilon, \delta)).
\] Now, the sign of \(g\) determines whether the condition
(\ref{eq:folklore}) above is met. A plot of \(g(\epsilon, \delta)\) can
be seen in Figure \ref{fig:comp1a}. Interestingly, there exist
\(0< \delta < 1\) and \(0< \epsilon < 1\) such that (\ref{eq:folklore})
is violated as \(g(0.97, 0.97) < -0.005\), suggesting that technically a
constraint on \(\delta\) is needed to avoid violating
(\ref{eq:folklore}).

\begin{figure}
\centering
\begin{subfigure}[t]{.4\linewidth}
\centering\includegraphics[width=\textwidth]{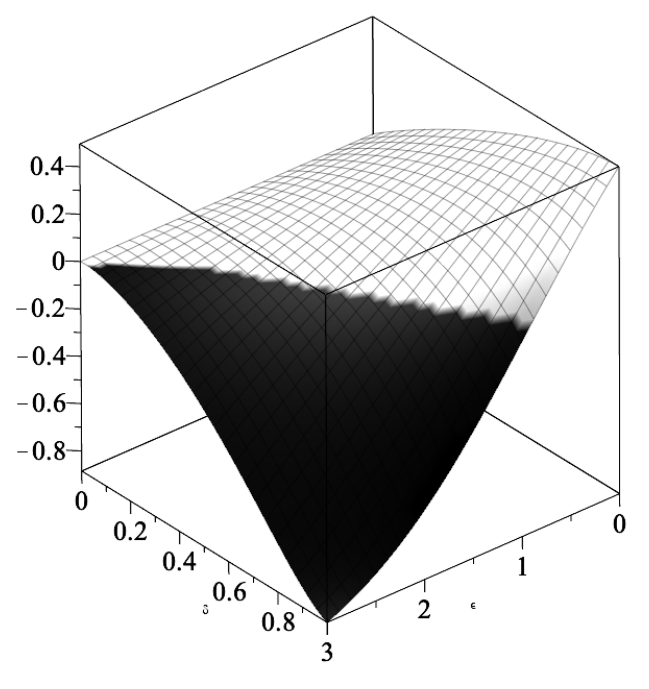}
\subcaption{}\label{fig:comp1a}
\end{subfigure} 
\hfill
\begin{subfigure}[t]{.4\linewidth}
\centering\includegraphics[width=\textwidth]{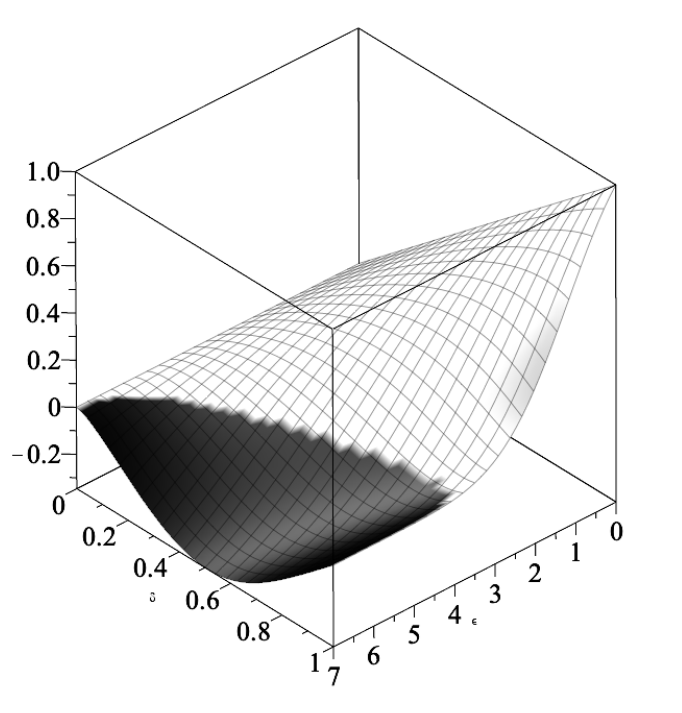}
\subcaption{}\label{fig:comp1b}
\end{subfigure}
\caption{\label{fig:comp1} (\subref{fig:comp1a}) $g(\epsilon, \delta)$. For the points $(\epsilon, \delta)$ where this
quantity is negative, indicated by dark gray, using
$\sigma=s(\epsilon, \delta, \Delta)$ violates the sufficient condition
(\ref{eq:folklore}). (\subref{fig:comp1b}) $d(\epsilon,\delta)$. Where
values are negative, indicated by dark gray, using
$\sigma=s(\epsilon, \delta, \Delta)$ violates condition
(\ref{eq:balle}) for $(\epsilon, \delta)$-differential privacy.}
\end{figure}

However, as Balle et al. \autocite{pmlr-v80-balle18a} point out,
violating (\ref{eq:folklore}) is is not the same as violating
\((\epsilon, \delta)\)-differential privacy. They show that
\((\epsilon,\delta)\)-differential privacy is achieved if and only if
\begin{align}
\label{eq:balle}
\Phi\left( \frac{\Delta}{2\sigma}-\frac{\epsilon\sigma}{\Delta} \right) - e^{\epsilon} \Phi\left(-\frac{\Delta}{2\sigma}-\frac{\epsilon\sigma}{\Delta}  \right) \leq \delta.
\end{align} They do not provide a closed form bound based on
(\ref{eq:balle}) but provide a numerical algorithm to compute the
smallest \(\sigma>0\) for which the above holds.

Substituting \(s(\epsilon, \delta, \Delta)\) for \(\sigma\) in the left
side of (\ref{eq:balle}), and subtracting this from \(\delta\) yields
\[d(\epsilon,\delta)=\delta - \left(\Phi(-v(s(\epsilon,\delta,
\Delta), 2)) - e^{\epsilon}\Phi(-v(s(\epsilon,\delta, \Delta),
-2))\right),
\] which does not depend on \(\Delta\). Analogous to \(g\) above, the
sign of \(d\) determines whether (\ref{eq:balle}) and
\((\epsilon, \delta)\)-differential privacy is violated. A plot of
\(d(\epsilon,\delta)\) can be seen in Figure \ref{fig:comp1b}. Negative
values indicate failure to be \((\epsilon, \delta)\)-differential
privacy. The plot suggests that even if the inequality of the standard
bound (\ref{eq:drsigma}) is strict, it is safe to consider it non-strict
for \(\epsilon\in (0,1)\). What the plot also shows, is that the
standard bound does not yield \((\epsilon, \delta)\)-differential
privacy for all \(\epsilon > 0\).

\hypertarget{comparing-the-two-bounds}{%
\section{Comparing the two bounds}\label{comparing-the-two-bounds}}

The standard bound and our bound differ in both being based on different
conditions and how closed form bounds are produced. Specifically,
(\ref{eq:folklore}) and via the Cramér--Chernoff style tail bound
\(\Pr(|Z| > x) \leq 2\phi(x)/x\), and (\ref{eq:suffcrit}) and via closed
form bound of the inverse error function leading to Lemma \ref{lem:qub},
respectively.

We now compare the two bounds for the common interval
\(\epsilon \in (0,1)\).

The ratio of the standard bound (\ref{eq:drsigma}) and our bound
(\ref{eq:bbound}) is \begin{align}
\label{eq:srat}
r(\epsilon, \delta) = 
\frac{2 \sqrt{\log\left(\frac{5}{4 \delta}\right)}}{\sqrt{a \log\left(\frac{1}{4 \delta  \left(1-\delta \right)}\right)+\epsilon}+s \sqrt{a \log\left(\frac{1}{4 \delta  \left(1-\delta \right)}\right)}}
\end{align} A value for \(r > 1\) means that the standard bound is
larger than ours. A plot of the ratio \(r\) can be seen in Figure
\ref{fig:comp2a}.

\begin{figure}[htb]
\centering
\begin{subfigure}[t]{.45\linewidth}
\centering\includegraphics[width=\textwidth]{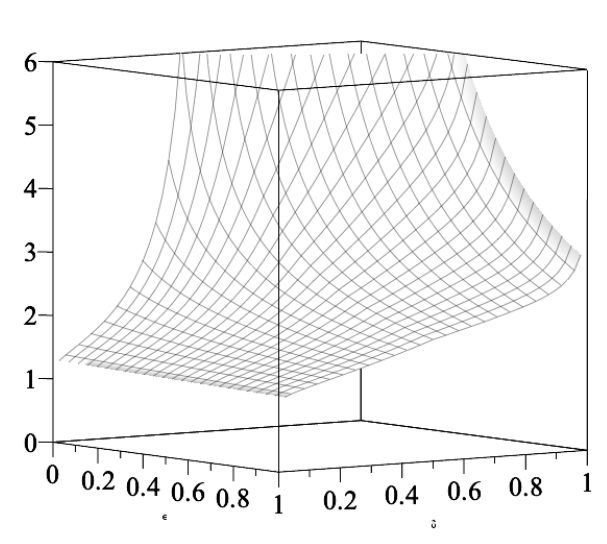}
\subcaption{}\label{fig:comp2a}
\end{subfigure} 
\hfill
\begin{subfigure}[t]{.45\linewidth}
\centering\includegraphics[width=\textwidth]{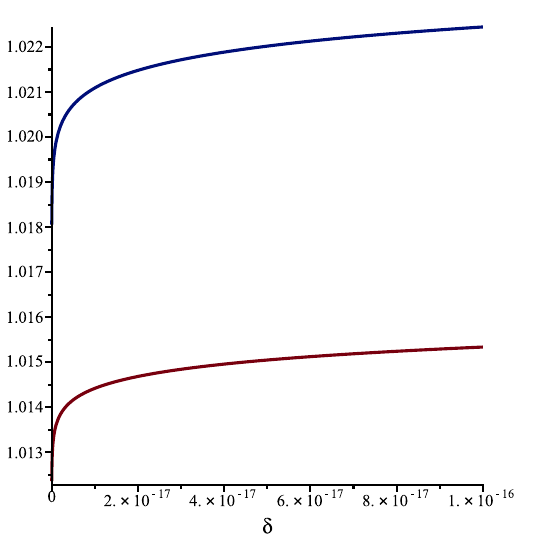}
\subcaption{}\label{fig:comp2c}
\end{subfigure}
\caption{\label{fig:comp2} (\subref{fig:comp2a}) The ratio $r$ given
in (\ref{eq:srat}). Note that all values are at
least 1. (\subref{fig:comp2c}) The functions $\rho_{+}(\delta) = r(0,
\delta)$ and $\rho_{-}(\delta) = r(1,\delta)$ giving upper and lower
bounds on the ratio $r$ in terms of $\delta$, respectively.}
\end{figure}

The partial derivative of \(r\) in (\ref{eq:srat}) with respect to
\(\epsilon\) is \begin{align}
\label{eq:sratdiff}
-\frac{\sqrt{\log\left(\frac{5}{4 \delta}\right)}}{\left(\sqrt{a \log\left(\frac{1}{4 \delta  \left(1-\delta \right)}\right)+\epsilon}+s \sqrt{a \log\left(\frac{1}{4 \delta  \left(1-\delta \right)}\right)}\right)^{2} \sqrt{a \log\left(\frac{1}{4 \delta  \left(1-\delta \right)}\right)+\epsilon}}.
\end{align} This derivative is negative for \(\delta > 0\) and
\(\epsilon > 0\), meaning that the ratio \(r\) decreases as \(\epsilon\)
increases. It can also be shown that partial derivative of \(r\) with
respect to \(\delta\) is \(O(\frac{1}{\delta})\).

We now look at what happens for small \(\delta \leq 1/2\). Inspecting
\(r\), we see that as \(\epsilon \to 0\) we get that
\(r \to \rho_{+}(\delta)\) where \[
\rho_{+}(\delta) = 
\frac{\sqrt{\log\left(\frac{5}{4 \delta}\right)}}{\sqrt{\log\left(\frac{1}{4 \delta  \left(1-\delta \right)}\right)}},
\] and as \(\epsilon \to 1\) we get that \(r \to \rho_{-}(\delta)\)
where \[
\rho_{-}(\delta) = 
\frac{2 \sqrt{\log\left(\frac{5}{4 \delta}\right)}}{\sqrt{\log\left(\frac{1}{4 \delta  \left(1-\delta \right)}\right)+1}+\sqrt{\log\left(\frac{1}{4 \delta  \left(1-\delta \right)}\right)}}.
\] Since \(r\) is decreasing in \(\epsilon\), the functions
\(\rho_{+}(\delta)\) and \(\rho_{-}(\delta)\) provide upper and lower
bounds on \(r\) for a given value of \(\delta \leq 1/2\). Both these
functions are increasing in \(0 < \delta \leq 1/2\) and as
\(\delta \to 0\) both go towards 1. A plot of \(\rho_{+}\) and
\(\rho_{-}\) for \(\delta \leq 10^{-16}\) can be seen in Figure
\ref{fig:comp2c}. As \(\rho_{+}(10^{-16}) < 1.023\), we see that for
small \(\delta\), the ratio \(r\) is not that big. In other words, while
our bound (\ref{eq:bbound}) is better than the standard bound, it is
only slightly better for \(\delta\) that can be considered small.

\hypertarget{discussion}{%
\section{Discussion}\label{discussion}}

Simple closed form bounds can be implemented using simple algorithms
with low implementation and computational complexity. The benefit of
this is a lower potential for errors, as well as decreasing power
consumption in low power devices whenever the alternative is using
iterative numerical algorithms to compute analytical solutions.
Furthermore, closed form relationships are useful in the analysis of
processes where privacy mechanisms are components or are applied
multiple times.

While our bound is better than the standard bound wherever this is
defined, our analysis suggests that for \(\delta\) that are small enough
to be considered relevant, the improvement is limited. Therefore, we
suggest that the main advantage of our bound is that it is valid for all
\(\epsilon>0\) and that it can be used as a drop in for the standard
bound without much difficulty even though it is slightly more complex.

We believe that the condition for \((\epsilon,\delta)\)-differential
privacy in Lemma \ref{lem:suffcritlc} is of independent interest as we
are able to derive the standard condition for \(\epsilon\)-differential
privacy in the case of Laplace noise.

Our bound (\ref{eq:bbound}) is based on the sufficient condition
(\ref{eq:suffcrit}). As Balle et al.~\autocite{pmlr-v80-balle18a}
demonstrate, the optimal \(\sigma\) can be gotten through numerically
optimizing the sufficient and necessary condition (\ref{eq:balle}). A
question we leave unaddressed for now is whether suitable closed form
bounds on \(\Phi\) can be substituted into (\ref{eq:balle}) to find an
even better closed form bound on \(\sigma\).

\printbibliography[title=References]

\renewcommand{\printbibliography}[1][]{}

\printbibliography[title=References]

\typeout{get arXiv to do 4 passes: Label(s) may have changed. Rerun}
\end{document}